\newtheorem{Theo}{Theorem}[section]
\newtheorem{Proposition}[Theo]{Proposition}
\theoremstyle{definition} 
\theoremstyle{definition} 
\theoremstyle{definition} 
\newcommand{\be}{\begin{equation}}
	\newcommand{\ee}{\end{equation}}
\newcommand{\bea}{\begin{equation}\begin{aligned}}
		\newcommand{\eea}{\end{aligned}\end{equation}}
\newcommand{\x}{\mathbf{x}}
\newcommand{\Alg}{\mathcal{A}}
\title{Corrections to Wigner--Eckart Relations  by \\ Spontaneous Symmetry Breaking}
\author{
	{Carlo Heissenberg}$^{1,2}$ and {Franco Strocchi}$^3$\\
	{\footnotesize
		$^{1}$  Nordita, Stockholm University and  KTH Royal Institute of Technology,}\\ 
	{\footnotesize
		Roslagstullsbacken 23, 10691 Stockholm, Sweden;  carlo.heissenberg@su.se }\\
	{\footnotesize	$^{2}$  Department of Physics and Astronomy, 
		Uppsala University, 75108 Uppsala, Sweden
	}\\
{\footnotesize
		$^{3}$  Dipartimento di Fisica, Universit\`a di Pisa, Largo Pontecorvo 3, 56127 Pisa, Italy}	
}
\date{}
\begin{document}

	\maketitle
	
\begin{abstract}{The matrix elements of operators transforming as irreducible representations of an unbroken symmetry group $G$ are governed by the well-known Wigner--Eckart relations. In the case of infinitely-extended systems, with $G$ spontaneously broken, we prove that the corrections to such relations are provided by symmetry breaking Ward identities, and simply reduce to a tadpole term involving Goldstone bosons. The analysis extends to the case in which an explicit symmetry breaking term is present in the Hamiltonian, with the tadpole term now involving pseudo Goldstone bosons. An explicit example is discussed, illustrating the two cases. 
}
\end{abstract}

\noindent
{\textbf{Keywords}: spontaneous symmetry breaking; Wigner--Eckart; Ward identities}
{\textbf{PACS}: 02.20.-a; 03.65.Fd; 11.30.Qc}
{\textbf{Preprint}: NORDITA 2020-071}
	
\section{Introduction}
	After the fundamental pioneering book by E. P. Wigner \cite{Wigner}, the use of symmetries has become an important tool for the analysis of relevant properties of atomic systems and, more generally, of~quantum~systems.
	
	The existence of a group $G$ of symmetries for a given quantum system {a priori} provides structural information on its properties, \textit{e.g.}, (a) the classification of the states according to irreducible representations of $G$ ({multiplet structure}), (b) the {energy degeneracy} of the states of a given multiplet, (c) the existence of {selection rules} in the transitions induced by time evolution.
	
	Furthermore, the structure of the matrix elements of operators that transform as irreducible representations of $G$ is completely dictated by group-theoretical methods, explicitly by the so-called Wigner--Eckart relations.
	Such a strategy has been widely used, \textit{e.g.}, for the invariance under rotations in atomic physics and for isospin symmetry in nuclear physics. 
	
	In particular, the Wigner--Eckart relations may be used to compute {symmetry breaking effects} due to a small symmetry breaking term $H_1$ in the otherwise symmetric Hamiltonian, and therefore obtain the first-order corrections (energy splitting, mixing, etc.) to the symmetric picture. 
	
	This allows for a simple iterative procedure consisting of a perturbative expansion, in each step of which one meets the problem of computing matrix elements of powers of $H_1$, structurally governed by the corresponding Wigner--Eckart relations.
	
	In fact, this strategy is not confined to atomic systems, but it can also be exploited for infinitely extended systems in the case in which the symmetry is unbroken or softly broken by an explicit symmetry-breaking term.
	
	With the advent of spontaneous symmetry breaking, a mechanism typically appearing in infinitely extended systems, or in any case for systems admitting more than one inequivalent realization (see \textit{e.g.} \cite{FS_SSB,NoiSBTI}), the Wigner--Eckart strategy appears to be completely out of use; first because the symmetries of the Hamiltonian may be completely lost at the level of the states. Second, only in special cases the effect of spontaneous symmetry breaking may be considered small and treated by means of a perturbative expansion. In any case, for spontaneously broken symmetries the question arises as to how to obtain energy splittings and other corrections with respect to the Wigner--Eckart theorem.	
	
	The aim of this note is to show that, also in the case of a spontaneously broken continuous symmetry group, generalized Wigner--Eckart relations may be derived in a systematic and rigorous way, the exact corrections to the symmetric case being encoded in a ``tadpole'' term. 
	This may be regarded as an alternative {and complementary} strategy to the use of nonlinear Lagrangians, {where one instead describes the dynamics of the system by considering the most general effective field theory Lagrangian, compatible with the given unbroken symmetries, in which the broken symmetries are non-linearly realized} (see, \textit{e.g.}, \cite[Chap.~19]{WeinbergQFT2}). 
	
	Similar exact results are provided in the case in which the Hamiltonian consists of a symmetric part $H_s$ and a term $H_1$ that explicitly breaks the symmetry (ref.~\cite[Chap.~18]{FS_SSB}).  In particular, if one considers a symmetry breaking order parameter $A$ with $e^{i H_1 t} A e^{-i H_1 t} = e^{ih_A t}A$, then the corresponding tadpole term involves pseudo Goldstone bosons with energy spectrum satisfying $\omega(\mathbf k )\to h_A$ as $\mathbf k \to 0$. 
	
	A simple application is discussed, highlighting how, in the case with symmetric Hamiltonian, the tadpole term gives rise to the energy splitting, and therefore plays the role of a non-symmetric contribution. 
	With the addition of a symmetry breaking term, the example displays a general mechanism of a tadpole involving pseudo Goldstone bosons.

	\section{
		Correction to Wigner--Eckart Relations by SSB
	}\label{sec: tadpoles}
	As a guide for the solution of the problem raised above, we recall that, for quantum systems described by the standard Schr\"odinger representation, the Wigner--Eckart relations may be simply derived \cite{Sak} by using the Wigner theorem, according to which a continuous symmetry group $G$ is described by an $n$-parameter group of continuous unitary operators $U(\lambda)$, with $\lambda= (\lambda_1,\ldots, \lambda_n)$.
	In the following, we consider the case of compact $G$.
	By Stone's theorem, the corresponding infinitesimal generators $Q^a$ are well defined, as well as their action on the states and on the operators. In particular, for states and operators belonging to irreducible representations of $G$, one has
	\begin{equation}\label{generazione_ovvia}
		i Q^a \Psi_j = d\indices{^a_{jk}} \Psi_k\,,\qquad
		i Q^a \Phi_j = f\indices{^a_{jk}} \Phi_k\,,\qquad
		i [Q^a, A_m] = \mathcal{D}\indices{^a_{mn}} A_n\,,
	\end{equation}
	{ with $d\indices{^a_{jk}}$, $f\indices{^a_{jk}}$ and $\mathcal{D}\indices{^a_{mn}}$ the matrices associated with $Q^a$ in the corresponding representations.}
	Then,
	\begin{equation}\label{WE_base}
		\langle \Psi_j , i[Q^a, A_m] \Phi_k\rangle=
		-i\langle Q^a \Psi_j , A_m \Phi_k\rangle - i \langle \Psi_j, A_m Q^a \Phi_k\rangle
	\end{equation}
	leading to the Wigner--Eckart relation
	\begin{equation}
		\mathcal{D}\indices{^a_{mn}} \langle \Psi_j , A_n \Phi_k\rangle =
		- \bar d\indices{^{a}_{jl}} \langle \Psi_l , A_m \Phi_k \rangle -f\indices{^a_{kl}}\langle \Psi_j, A_m \Phi_l\rangle\,.
	\end{equation}
	In the case of spontaneous symmetry breaking, the possibility of describing the symmetry in terms of unitary operators { is precluded} and no infinitesimal generator $Q^a$ {exists}. Furthermore, the states do not  transform as irreducible representations of $G${, not even approximately}. 
	
	Thus, it seems quite difficult, if not impossible, to obtain suitably modified versions of the above relations. 
	A solution is provided by the algebraic approach , according to which a {spontaneously} broken symmetry is well defined on the relevant algebra $\Alg$ of canonical variables and/or observables, even if it fails to be implemented by unitary operators on the states. Hence, $G$ defines a group $\beta^{(\lambda)}$ of automorphisms of $\mathcal A$, { where $\lambda$ stands for the group parameters}, and we shall denote by $\delta^a$ the infinitesimal variation with respect to $\beta^{(\lambda)}$ 
	\be\label{limite1}
	\delta^a A
	\equiv \frac{d}{d\lambda^a}\beta^{(\lambda)}(A)\Big|_{\lambda=0},
	\ee
	for any $A\in\mathcal A$.
	Another important ingredient for the analysis of spontaneously broken continuous symmetries is their being locally generated {at the level of commutators, as we discuss in the next subsection}.
	
	\subsection{Spontaneous Symmetry Breaking with \\ Symmetric Hamiltonian}
	
	For infinitely extended systems, which typically exhibit spontaneous symmetry breaking, the above algebraic action of the symmetry may be exploited under the following conditions: one considers a representation of $\Alg$ given by a pure phase with a cyclic ground state vector $\Psi_\omega$, and a continuous symmetry group $G$ that commutes with time translations and is locally generated by conserved currents $j_\mu^a(\mathbf x, t)$. More precisely, for any $A\in\Alg$,
	\be\label{limite}
	\delta^a A = i \lim_{R\to\infty} [Q^a_R, A]\,,
	\ee
	where 
	\be\label{smear1}
	Q_R^a=\int_{|x|<R}d^dx\, j^a_0(\x,0)\,,
	\ee
	$d$ being the number of space dimensions. { The infrared cutoff is needed, in general, because the generator $Q^a$ may not exist as an operator in the Hilbert space of the system, as discussed below.}
	Actually, a smoother smearing may be necessary for the definition of $Q^a_R$, { which furthermore may not admit a sharp-time restriction as a distribution}, typically
	\be\label{smear2}
	Q^a_R=\int d^dx\, f_R(\x) \alpha(t) j^a_0(\x,t)\,,
	\ee
	where the test functions $f_R$ and $\alpha$ are infinitely differentiable and of compact support, and satisfy
	\be
	f_R(\mathbf x) \equiv f\bigg(\frac{|\mathbf x|}{R}\bigg),\quad 
	f(x) = \begin{cases}
		1 &\text{if }x<1\\
		0 &\text{if }x>1+\epsilon
	\end{cases}
	\quad
	\text{and}
	\quad
	\int \!\alpha(t)\,dt=1\,.
	\ee 
	{Such a smearing gives rise to well defined commutators, which, thanks to current conservation and locality, converge in the limit $R\to\infty$ even if the operators $Q_R^a$ do not.}
	{More generally, the} existence of the limit \eqref{limite}, with $Q^a_R$ given by \eqref{smear1} or more conveniently \eqref{smear2}, is guaranteed if the dynamics does not induce a long-range delocalization. This condition is automatically satisfied if the algebra $\Alg$ is generated by operators satisfying canonical (anti-)commutation relations, which are strictly local, and the delocalization induced by time evolution decays faster than $|\x|^{1-d}$ at large distances \cite{swiecaRANGE, FS_SSB}.
	
	Let $A_m$, $B_j$ and $C_k$ be elements of $\Alg$ transforming as irreducible representations of $G$, so that
	\be
	\delta^a B_j = d\indices{^a_{jk}} B_k\,,\qquad
	\delta^a C_j = f\indices{^a_{jk}} C_k\,,\qquad
	\delta^a A_m = \mathcal{D}\indices{^a_{mn}} A_n\,,
	\ee
	then, the algebraic identity
	\bea
	\langle B_j^\ast \Psi_\omega, \delta^a A_m C_k \Psi_\omega \rangle
	=\ &
	-\langle \delta^a B_j^\ast \Psi_\omega, A_m C_k \Psi_\omega\rangle
	-\langle B_j^\ast \Psi_\omega, A_m \delta^a C_k \Psi_\omega\rangle\\
	&+\langle \Psi_\omega,\delta^a(B_j A_m C_k )\Psi_\omega\rangle 
	\eea
	leads to
	\begin{equation}\label{sb_tad}
		\begin{aligned}
			\mathcal{D}\indices{^a_{mn}} \langle B_j^\ast \Psi_\omega , A_n C_k\Psi_\omega\rangle =&
			- \bar d\indices{^{a}_{jl}} \langle B_l^\ast \Psi_\omega , A_m C_k \Psi_\omega \rangle -f\indices{^a_{kl}}\langle B_j^\ast \Psi_\omega, A_m C_l\Psi_\omega\rangle\\
			&+\langle \Psi_\omega,\delta^a(B_j A_m C_k )\Psi_\omega\rangle\,.
		\end{aligned}
	\end{equation}
	If $G$ is unbroken, then the states $\Psi_j=B_j \Psi_\omega$ and $\Phi_k=C_k \Psi_\omega$ transform as the same irreducible representations of $B_j$ and $C_k$ respectively, thus reproducing the corresponding multiplet structure, and \eqref{sb_tad} coincides with the Wigner--Eckart relations \eqref{WE_base}. 
	
	The relevant point is that the corrections due to spontaneous symmetry breaking are fully encoded in the last term on the right-hand side of \eqref{sb_tad}, which is nontrivial whenever the ground-state expectations of $B_j A_m C_k$ are non-symmetric under the $a$-th subgroup of $G$.

	\section{Generalized Wigner--Eckart Relations and Ward Identities}
	
	Thanks to very important results by Ferrari and Picasso \cite{FP}, in the case of relativistic quantum field theories, the symmetry breaking term appearing on the right-hand side of \eqref{sb_tad} can be written in a very useful form, namely as a {tadpole} contribution.
	
	In the following, we discuss a version without using Lorentz symmetry.
	We consider a pure phase defined by a ground state vector $\Psi_\omega$, invariant under the Euclidean group $\mathbb E$ and $CPT$ symmetry.
	
	For simplicity, we consider the improper states $|\mathbf k, k_0\rangle$ of definite energy $k_0$ and momentum $\mathbf k$ (it is only cumbersome to reproduce the same conclusions in terms of wave packets), with the normalization $\langle\mathbf k, k_0 | \mathbf k', k'_0\rangle = \delta(\mathbf k - \mathbf k')$. We denote by $|A\rangle$ the state vector $A\Psi_\omega$.
	
	The spontaneous breaking of the $a$-th subgroup of $G$ implies the existence of Goldstone modes $|\mathbf k , k_0, a, \gamma\rangle$, \textit{i.e.}, gapless modes with infinite lifetime in the $\mathbf k \to 0$ limit, with $\gamma$ denoting possible additional quantum numbers. For simplicity, $\gamma$ will not be spelled out in the following (omitting the corresponding sum over $\gamma$ in the derived equations).
	
	\begin{Proposition}
		Under the above assumptions,
		let $\beta^{(\lambda)}$ be a group of symmetries generated by conserved currents $j_\mu^a(t, \mathbf x)$, which transform covariantly under $\mathbb E$ and $CPT$, and such that
		\begin{equation}\label{integrabil}
			\langle [j_\mu^a(t,\mathbf x), A]\rangle
			=
			\langle \Psi_\omega,[j_\mu^a(t,\mathbf x), A] \Psi_\omega\rangle
		\end{equation}
		are absolutely integrable in $\mathbf x$ as tempered distributions in $t$.
		Then (omitting the sum over $\gamma$),
		\be\label{dadim1}
		\langle \delta^a A \rangle = (2\pi)^{3/2}\lim_{\mathbf k\to0} F_a(\mathbf k^2, k_0)\, (\langle \mathbf k, k_0, a|A\rangle+\langle A | -\mathbf k, k_0, \bar a \rangle)\,,
		\ee
		where $F_a(\mathbf k^2, k_0)$ is defined by 
		\bea\label{daprov_2}
		-i (2\pi)^{-3/2} F_a(\mathbf k^2, k_0)=\langle j^a_0(0)|\mathbf k, k_0, a \rangle\,,
		\eea
		and $|\mathbf k , k_0, \bar a\rangle$ is given in terms of CPT transformation 
		\be
		|\mathbf k, k_0, \bar a \rangle = \mathrm{CPT} |\mathbf k, k_0, a\rangle\,.
		\ee
		Hence, Equation \eqref{sb_tad} takes the form
		\bea\label{tadpole_FINALE}
		&\langle B_i^\ast \Psi_\omega, \delta^a A_m C_j \Psi_\omega \rangle
		+\langle \delta^a B_i^\ast \Psi_\omega, A_m C_j \Psi_\omega\rangle
		+\langle B_i^\ast \Psi_\omega, A_m \delta^a C_j \Psi_\omega\rangle\\
		&=
		(2\pi)^{3/2}\lim_{\mathbf k \to 0} F_a(\mathbf k^2, k_0)\,  (\langle \mathbf k , k_0, a |B_i A_m C_j \rangle + \langle B_i A_m C_j |-\mathbf k, k_0, \bar a\rangle)
		\,,
		\eea
		so that the corrections to the Wigner--Eckart relations are given by the tadpole term on the right-hand side.	
	\end{Proposition}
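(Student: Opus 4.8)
The plan is to reduce the breaking term $\langle \delta^a A\rangle$ to its Goldstone content by a spectral analysis of the local generation formula \eqref{limite}--\eqref{smear2}, following the strategy of Ferrari and Picasso \cite{FP} but invoking only space-time translation covariance and $CPT$ rather than full Lorentz invariance. First I would write
\be
\langle \delta^a A\rangle = i\lim_{R\to\infty}\int d^dx\,dt\, f_R(\mathbf x)\,\alpha(t)\,\langle[j_0^a(\mathbf x,t),A]\rangle\,.
\ee
Inserting a complete set of improper energy-momentum eigenstates $|\mathbf k,k_0\rangle$ between the current and $A$ in each of the two orderings of the commutator, and using translation covariance together with the invariance of $\Psi_\omega$, so that $j_0^a(\mathbf x,t)=e^{i(Ht-\mathbf P\cdot\mathbf x)}\,j_0^a(\mathbf 0,0)\,e^{-i(Ht-\mathbf P\cdot\mathbf x)}$, each matrix element acquires a plane-wave factor $e^{i(\mathbf k\cdot\mathbf x-k_0 t)}$. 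The space and time integrals then collapse to the Fourier transforms $\tilde f_R(\mathbf k)$ and $\tilde\alpha(k_0)$, and the whole expression becomes a spectral integral against a density $\rho_0^a(\mathbf k,k_0)$ built from $\langle j_0^a(0)|\mathbf k,k_0\rangle$ and its conjugate.

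Second, I would take the limit $R\to\infty$. Since $\tilde f_R(\mathbf k)\to(2\pi)^d\delta^d(\mathbf k)$, the intermediate momenta are pinned to $\mathbf k=0$, and the integrability hypothesis \eqref{integrabil} is what legitimizes interchanging this limit with the spectral integral. At this point the decisive input is current conservation. Written at the level of spectral densities, $\partial_0 j_0^a+\nabla\cdot\mathbf j^a=0$ gives $k_0\,\rho_0^a(\mathbf k,k_0)=\sum_i k_i\,\rho_i^a(\mathbf k,k_0)$, whence $k_0\,\rho_0^a(\mathbf 0,k_0)=0$; thus $\rho_0^a(\mathbf 0,k_0)$ is supported at $k_0=0$. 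This is exactly the statement that the only intermediate states surviving the limit are the gapless modes with $\omega(\mathbf k)\to0$, namely the Goldstone bosons $|\mathbf k,k_0,a\rangle$, and the nonvanishing of $\langle \delta^a A\rangle$ forces their residue to be nonzero—the Goldstone theorem in this setting. Because $\tilde\alpha(0)=1$, the time smearing drops out on the $k_0\to0$ support.

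Third, I would identify the residue. The $k_0=0$, $\mathbf k\to0$ part of $\rho_0^a$ is governed precisely by the form factor $F_a(\mathbf k^2,k_0)$ of \eqref{daprov_2}, with accompanying overlap $\langle\mathbf k,k_0,a|A\rangle$. The contribution from the opposite ordering of the commutator produces $\langle A|-\mathbf k,k_0,\bar a\rangle$: here I would use the assumed $CPT$ covariance of $j_\mu^a$ together with the definition $|\mathbf k,k_0,\bar a\rangle=\mathrm{CPT}\,|\mathbf k,k_0,a\rangle$ to relate the second matrix element to the first and to fix the relative sign and the reflection $\mathbf k\to-\mathbf k$. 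Collecting the two pieces yields \eqref{dadim1}. Finally, \eqref{tadpole_FINALE} follows by applying \eqref{dadim1} to the composite operator $A=B_iA_mC_j$ and substituting into the algebraic identity \eqref{sb_tad}.

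The step I expect to be the main obstacle is the passage to the $R\to\infty$ limit at the level of the spectral density and the clean extraction of the $k_0=0$ support. One cannot simply set $\mathbf k=0$ inside $\rho_0^a$: the Goldstone contribution is a distribution $\delta(k_0-\omega(\mathbf k))$ with $\omega(\mathbf k)\to0$, so the order of the $R\to\infty$ and $\mathbf k\to0$ limits must be controlled, and one must verify—using the no-long-range-delocalization condition discussed after \eqref{smear2}—that the surface flux $\int(\nabla f_R)\cdot\mathbf j^a$ does not contribute in the limit and that the massive part of the spectrum is genuinely annihilated rather than merely suppressed. Handling these distributional limits, rather than the algebraic bookkeeping, is where the real work lies.
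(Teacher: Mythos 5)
Your proposal is correct and takes essentially the same route as the paper: the paper's own proof simply invokes the Goldstone theorem to pass directly to the spectral form of $\langle \delta^a A\rangle$ as a $\mathbf k\to 0$ limit over Goldstone intermediate states, then uses rotation invariance of $\Psi_\omega$ to write $\langle j_0^a(0)|\mathbf k,k_0,a\rangle=-i(2\pi)^{-3/2}F_a(\mathbf k^2,k_0)$ and CPT invariance to get $\langle j_0^a(0)|\mathbf k,k_0,a\rangle=-\langle\mathbf k,k_0,\bar a|j_0^a(0)\rangle$, which is exactly the identification and sign-fixing you describe, followed by the substitution $A=B_iA_mC_j$ into \eqref{sb_tad}. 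The only differences are that you unpack the Goldstone-theorem step (smearing, spectral decomposition, current conservation forcing the $k_0=0$ support) which the paper cites as a known result, and that you leave implicit the rotation invariance of the ground state, which is what guarantees that the form factor in \eqref{daprov_2} depends on $\mathbf k$ only through $\mathbf k^2$.
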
	
	\begin{proof}
		By the Goldstone theorem, one has
		\bea\label{interm}
		\langle \delta^a A \rangle =i(2\pi)^3 \lim_{\mathbf k \to0} 
		\big[&\langle j^a_0(0) |\mathbf k,k_0, a\rangle \langle \mathbf k,k_0, a|A\rangle\\
		&-
		\langle A |-\mathbf k,k_0,\bar a\rangle \langle -\mathbf k,k_0,\bar a| j^a_0(0)\rangle
		\big] 
		\,,
		\eea
		Invariance of $\omega$ under rotations implies that 
		\be
		\langle j^a_0(0)|\mathbf k, k_0, a \rangle = -i (2\pi)^{-3/2} F_a(\mathbf k^2, k_0)
		\ee
		for some function $F_a(\mathbf k^2, k_0)$
		and, by CPT invariance, 
		\be
		\langle j_0^a(0)|\mathbf k, k_0, a \rangle=
		-\langle \mathbf k, k_0, \bar a|j_0^a(0)\rangle\,.
		\ee 
		Therefore
		\bea\label{dariscr}	
		\langle \delta^a A \rangle
		= (2\pi)^{3/2} \lim_{\mathbf k \to 0} F_a(\mathbf k^2, k_0) \Big(\langle \mathbf k, k_0, a|A\rangle+\langle A |-\mathbf k, k_0, \bar a\rangle \Big)
		\,.
		\eea
	\end{proof}
	
	\paragraph{\textbf{Remarks:}}
	\noindent\textbf{1.} If the order parameter $A$ is CPT- and rotation-invariant,  $\langle A |-\mathbf k, k_0, \bar a\rangle=\langle A |\mathbf k, k_0, a\rangle$, Equation~\eqref{dariscr} reduces to 
	\be
	\langle \delta^a A \rangle = (2\pi)^{3/2}\lim_{\mathbf k\to0} F_a(\mathbf k^2, k_0)\, (\langle \mathbf k, k_0, a|A\rangle+\langle A |\mathbf k, k_0, a \rangle)\,,
	\ee
	\noindent\textbf{2.} For the special case of relativistic systems, $F_a(\mathbf k^2, k_0)=F_a(k_0^2-\mathbf k^2)=F_a(0)$, by the requirement of Lorentz invariance
	(see \cite{FP}).
	
	\noindent\textbf{3.} Invariance of the ground state under rotations implies that there exists a function $G_a(\mathbf k^2, k_0)$ such that
	\be
	\langle j^a_i(0) |\mathbf k, a\rangle = -i (2\pi)^{-3/2} k_i G_a(\mathbf k^2, k_0). 
	\ee
	and, by current conservation,
	\be
	|\mathbf k|^2 G_a(\mathbf k^2, k_0)= k_0 F_a(\mathbf k^2, k_0)\,.
	\ee
	The spatial components of the charge commutator then satisfy
	\bea
	&{i}\big(\langle j^a_i(0) |\mathbf k, k_0, a\rangle \langle  
	\mathbf k, k_0, a|A\rangle 
	-
	\langle A |-\mathbf k, k_0, \bar a\rangle \langle  
	-\mathbf k, k_0, \bar a| j^a_i(0) \rangle \big)
	\\
	&= (2\pi)^{-3/2}\frac{ k_i k_0}{|\mathbf k|^2} F_a(\mathbf k^2, k_0) \Big(\langle \mathbf k, k_0, a|A\rangle-\langle A |-\mathbf k, k_0, \bar a\rangle \Big)
	\eea
	and, by continuity as $\mathbf k\to0$, which follows by the charge integrability condition \eqref{integrabil}, we conclude
	\be
	\lim_{\mathbf k\to0}\frac{k_0}{|\mathbf k|}F_a(\mathbf k^2, k_0) \Big(\langle \mathbf k, k_0, a|A\rangle-\langle A |-\mathbf k, k_0, \bar a\rangle \Big)=0\,.
	\ee
	If the energy-momentum dispersion relation of the Goldstone states is such that $k_0/|\mathbf k|$ does not tend to $0$ as $\mathbf k \to 0$,~then 
	\be
	F_a(\mathbf k^2,k_0) (\langle \mathbf k, k_0, a|A\rangle-\langle A |-\mathbf k, k_0, \bar a\rangle)
	\ee
	must go to zero as $\mathbf k \to 0$ (as it happens in particular for Goldstone bosons of type I, see \cite{nielsenCOUNTING}, or in the relativistic case) and Equation \eqref{dadim1} takes the simpler form
	\be
	\langle \delta^a A \rangle = 2(2\pi)^{3/2}\lim_{\mathbf k\to0} F_a(\mathbf k^2, k_0)\langle \mathbf k, k_0, a|A\rangle\,.
	\ee
	
	\subsection{Symmetry Breaking Ward Identities for \\ Non-Symmetric Hamiltonians}
	One may consider the case in which the symmetry group $G$ does not commute with time evolution, and the (finite-volume) Hamiltonian $H$ can be written as a $G$-invariant part $H_s$ plus an explicit symmetry breaking term $H_1$, which commutes with $H_s$. To avoid time smearing problems and the possible presence of infinite renormalization constants, we restrict our discussion to the non-relativistic case and assume that the elements of our algebra $\Alg$ admit sharp-time restrictions.
	
	In this case, we assume that the symmetry group $G$ is locally generated at $t=0$ by local charges 
	\be
	Q_R^a = \int d^dx f_R(\mathbf x) j_0^a(\mathbf x, 0)\,,
	\ee
	with $j_0^a$ covariant under space and time translations.
	Then, the correction to the Wigner--Eckart relations, at~$t=0$, is given as before by \eqref{sb_tad}.
	
	The question arises of representing the last term in \eqref{sb_tad} as a tadpole contribution. To this purpose, we~have to generalize \eqref{dadim1}. The proof proceeds as before, with the proviso that the relevant intermediate states contributing to the two-point function \eqref{interm} at $\mathbf k=0$ do not have a gapless energy-momentum dispersion relation.
	
	In order to find such an energy spectrum, since the corresponding (finite-volume) Hamiltonians {$H_s$, $H_1$} commute, the { corresponding} one-parameter groups {$\alpha_s^t$, $\alpha_1^t$} satisfy $\alpha^t = \alpha_s^t \alpha_1^t$. We consider representations in which $\alpha_s^t$ is unbroken, so that the implementability of $\alpha^t$ implies that the ground-state correlation functions are invariant under $\alpha_1^t$.  
	
	Furthermore, we consider the case in which the one-parameter group $\alpha_1^t$ is compact. This condition is satisfied, in particular, whenever $H_1$ is proportional to one of the generators of the compact symmetry group $G$, as is generically the case when the explicit symmetry breaking is due to mass terms or coupling to external fields.
	
	\begin{Proposition}
		Under the above assumptions, let $A$ be an element of $\Alg$ with definite $H_1$ ``charge'': $\alpha_1^t(A)=e^{ih_At} A$. Then, 
		\be
		\lim_{R\to\infty}\langle [\alpha^t(Q_R), A] \rangle = e^{-ih_At}\lim_{R\to\infty}[\langle Q_R, A] \rangle 
		\ee
	\end{Proposition}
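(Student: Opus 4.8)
The plan is to exploit the factorization $\alpha^t=\alpha_s^t\,\alpha_1^t$ together with the assumed invariance of the ground-state correlation functions under the compact group $\alpha_1^t$, so as to trade the full time evolution of the cutoff charge $Q_R$ for the purely symmetric one $\alpha_s^t(Q_R)$, at the cost of the $c$-number phase produced by the definite $H_1$-charge of $A$. The decisive algebraic fact is that, since $H_s$ and $H_1$ commute, so do the automorphisms $\alpha_s^t$ and $\alpha_1^t$, whence $\alpha_1^{-t}\alpha^t=\alpha_1^{-t}\alpha_s^t\alpha_1^t=\alpha_s^t$.

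Concretely, for each fixed $R$ I would insert $\alpha_1^{-t}$ inside the expectation,
\[
\langle [\alpha^t(Q_R), A] \rangle = \langle \alpha_1^{-t}\big([\alpha^t(Q_R), A]\big) \rangle,
\]
which is legitimate because $\langle\,\cdot\,\rangle=\langle\Psi_\omega,\,\cdot\,\Psi_\omega\rangle$ is invariant under $\alpha_1^t$ for every $t$. As $\alpha_1^{-t}$ is an automorphism it distributes over the commutator, and using $\alpha_1^{-t}\alpha^t=\alpha_s^t$ together with $\alpha_1^{-t}(A)=e^{-ih_At}A$ one obtains the \emph{exact} finite-$R$ identity
\[
\langle [\alpha^t(Q_R), A] \rangle = e^{-ih_At}\,\langle [\alpha_s^t(Q_R), A] \rangle.
\]
No interchange of limits is needed up to this point.

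It then remains to verify that $\lim_{R\to\infty}\langle[\alpha_s^t(Q_R),A]\rangle=\lim_{R\to\infty}\langle[Q_R,A]\rangle$, i.e.\ that the symmetric evolution of the smeared charge leaves the commutator unchanged in the limit. Here I would use that, $H_s$ being $G$-invariant, the current $j_0^a$ is conserved under $\alpha_s^t$, so that $\tfrac{d}{dt}\alpha_s^t(Q_R)=\int d^dx\,\nabla f_R(\x)\cdot\mathbf j^a_s(\x,t)$ is a surface term supported in the shell $R<|\x|<R(1+\epsilon)$. The main obstacle is to control the commutator of this surface term with the local operator $A$: by the no-long-range-delocalization hypothesis invoked earlier — the delocalization induced by (here symmetric) time evolution decaying faster than $|\x|^{1-d}$ — this commutator vanishes as $R\to\infty$, so $\langle[\alpha_s^t(Q_R),A]\rangle$ becomes $t$-independent in the limit and equals its value at $t=0$. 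Inserting this into the exact finite-$R$ identity gives the assertion. This locality estimate, which is the only genuinely analytic input, is precisely the ingredient that already guarantees the existence of the limit \eqref{limite}.
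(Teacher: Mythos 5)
Your proof is correct and follows essentially the same route as the paper's: both rest on (i) the invariance of the ground-state correlation functions under $\alpha_1^t$ to move the explicit-breaking evolution onto $A$, (ii) the definite $H_1$-charge of $A$, and (iii) the $G$-invariance of $H_s$ (current conservation plus locality) to discard $\alpha_s^t$ from the limiting charge commutator. The only difference is organizational: you transfer $\alpha_1^{-t}$ first, obtaining an exact finite-$R$ identity, and spell out the Swieca-type surface-term estimate at the end, whereas the paper discards $\alpha_s^t$ first and leaves that locality step implicit in the phrase ``since $H_s$ is invariant under the symmetry generated by $Q_R$.''
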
 
	\begin{proof}
		Since $H_s$ is invariant under the symmetry generated by $Q_R$, one has
		\begin{equation}\begin{aligned}
				\lim_{R\to\infty}\langle [\alpha^t(Q_R), A] \rangle
				=&\lim_{R\to\infty}\langle [\alpha_1^t(Q_R), A] \rangle
				=\lim_{R\to\infty}\langle [(Q_R),\alpha_1^{-t} (A)] \rangle\\
				=& e^{-ih_A t} \lim_{R\to\infty}\langle [Q_R, A] \rangle\,. 
		\end{aligned}\end{equation}
	\end{proof}
	
	This implies that, in the limit $\mathbf k \to 0 $, the spectrum of the intermediate states contributing to the above two-point function
	satisfies 
	\be\label{gap}
	\lim_{\mathbf k\to0} \omega(\mathbf k) = h_A\,,
	\ee
	and, thus, we have the analog of \eqref{dadim1} with $k_0$ satisfying the above dispersion relation, rather than $\omega(\mathbf k)\to0$ as $\mathbf k \to0$.

	\subsection{An Example}
	In order to illustrate the above results,
	we consider the following model, describing non-relativistic, complex scalar fields $\psi_j(t,\mathbf x)$, for $j=1,2$, with (formal) Hamiltonian density
	\be
	\mathcal H = \frac{1}{2m}\nabla \psi^\ast_j \nabla \psi_j+\frac{\lambda}{2} (\psi_j^\ast \psi_j-v^2)^2\,,
	\ee
	where $\lambda$, $v$ are positive constants, and the fields satisfy the standard canonical commutation relations
	$
	[\psi_j(0,\mathbf x), \psi^\ast_k (0, \mathbf y)]=\delta\indices{_{jk}} \delta(\mathbf x - \mathbf y).
	$
	This model exhibits an $SU(2)$ symmetry, whose infinitesimal form reads
	\be
	\delta^a \psi_j = \frac{i}{2} \sigma\indices{^{a}_{jk}}\psi_k = i \lim_{R\to\infty}[Q_R^a, \psi_j]\,,
	\ee
	as in \eqref{limite} with $ j_0^a =-\frac{1}{2}\psi^\ast_j \sigma\indices{^{a}_{jk}}\psi_k$
	($\sigma^a$ are the Pauli matrices), together with a $U(1)$ symmetry \mbox{$\beta^\alpha(\psi_j)= \exp{(i\alpha)}\psi_j$.}
	
	We consider a representation with nonvanishing ground-state expectation of $\psi_1$, \textit{i.e.}, $\langle \psi_1 \rangle = v\neq 0$, leading to spontaneous breaking, in particular, of the subgroup corresponding to $\sigma^1$, since $\langle \delta^1 \psi_2\rangle= iv/2$. We introduced 
	$
	\phi_1\equiv\psi_1-\langle \psi_1 \rangle
	$
	and
	$
	\phi_2\equiv\psi_2\,,
	$
	which parametrize the fluctuations around a minimum of the potential.
	The Goldstone bosons are obtained by applying $\phi_2$ to the ground state.
	
	According to the general strategy described in Section \ref{sec: tadpoles}, it is convenient to consider the Ward identity of Equation \eqref{sb_tad}, with $A_m$, $B_i$ and $C_j$ replaced by $H$, $\psi_1^\ast(\mathbf p)$ and $\psi^\ast_2(\mathbf q)$ respectively. Then, one has
	\begin{equation}\begin{aligned}
		0&=\langle \psi_1(\mathbf p) \delta  H \psi_2^\ast(\mathbf q) \rangle\\
		&=-\langle \delta \psi_1(\mathbf p)H \psi^\ast_2(\mathbf q)\rangle - \langle \psi_1(\mathbf p)H \delta \psi^\ast_2(\mathbf q)\rangle+\langle \delta\large( \psi_1(\mathbf p)  H \psi_2^\ast(\mathbf q) \large)\rangle
	\end{aligned}\end{equation}
	Considering the specific infinitesimal variation $\delta^1$ associated to $\sigma^1$, this gives the ``energy splitting'' between the states obtained by applying $\phi_1^\ast(\mathbf p)$ and $\phi_2^\ast(\mathbf q)$ to the ground state, namely
	\be\begin{aligned}
		&\langle \phi_1(\mathbf p)H \phi^\ast_1(\mathbf q)\rangle - \langle \phi_2(\mathbf p)H \phi^\ast_2(\mathbf q)\rangle
		=
		\langle \psi_1(\mathbf p)H \psi^\ast_1(\mathbf q)\rangle - \langle \psi_2(\mathbf p)H \psi^\ast_2(\mathbf q)\rangle\\
		&=-2i\langle \delta \psi_1(\mathbf p)H \psi^\ast_2(\mathbf q)\rangle -2i \langle \psi_1(\mathbf p)H \delta \psi^\ast_2(\mathbf q)\rangle\,.
	\end{aligned}\ee 
	The analog of \eqref{tadpole_FINALE} then reads
	\be\label{esempio_WI}
	\begin{aligned}
		&\langle \phi_1(\mathbf p)H \phi^\ast_1(\mathbf q)\rangle - \langle \phi_2(\mathbf p)H \phi^\ast_2(\mathbf q)\rangle\\
		&= 2i (2\pi)^{3/2} \lim_{\mathbf k \to 0} F(\mathbf k^2, k_0)
		\big(\langle \mathbf k| \phi_1(\mathbf p) H \phi_2^\ast(\mathbf q)\rangle - \langle \phi_1(\mathbf p) H \phi_2^\ast(\mathbf q) |\mathbf k \rangle \big)\,.
	\end{aligned}
	\ee 
	The right-hand side is easy to calculate in the tree approximation, according to which $\phi_1$ and $\phi_2$ are Fock fields: $F(\mathbf k^2, k_0) = - \frac{i}{2}v$,
	\be\begin{aligned}
		\langle \psi_1(\mathbf p) H \psi_2^\ast(\mathbf q) |\mathbf k \rangle=0\,,\qquad
		\langle \mathbf k| \psi_1(\mathbf p) H \psi_2^\ast(\mathbf q)\rangle=
		\frac{\lambda v}{(2\pi)^{3/2}} \delta(\mathbf q-\mathbf p - \mathbf k)\,,
	\end{aligned}
	\ee
	since the only nonvanishing contributions arise from the cubic term $\lambda v \phi^\ast_1 \phi_2^* \phi_2$ in the potential.
	This gives the ``mass splitting'' $\lambda v^2$ (in agreement with the result of the Fock approximation in the left-hand side). 
	
	On the other hand, the residual unbroken group $\gamma^\alpha(\psi_1)=\psi_1$, $\gamma^\alpha(\psi_2)=e^{i\alpha}\psi_2$ gives rise to Ward identities with no tadpole correction, \textit{i.e.}, to standard Wigner--Eckart relations. In particular, we have selection rules of the type
	$
	\langle B^\ast H C \rangle = 0\,,
	$
	whenever the states obtained applying $B$ and $C$ to the ground state are eigenstates of the unitary operator that implements $\gamma^\alpha$ with different eigenvalues; for instance, $\langle \psi_1 H \psi_2^\ast\rangle=0$.
	
	We may also introduce an explicit symmetry breaking  term
	$
	\mathcal H_1 = \mu \psi^\ast_2 \psi_2
	$
	in the (finite-volume) Hamiltonian density so that the total (finite-volume) Hamiltonian consists of a symmetric term and a symmetry breaking term $H_1$. Since in the infinite-volume limit $\alpha^t=\alpha_s^t \alpha_1^t$, the existence of $H_s$ implies the  implementability $\alpha_1^t$ and, hence, $\langle \psi_2 \rangle =0$. Therefore, as before we consider the case $\langle \psi_1 \rangle =v\neq 0$. In this case, taking into account the explicit breaking $\delta^1 \mathcal H = -\mu j_0^{(2)}$, and that the Goldstone modes acquire an energy gap $\mu$ due to
	\be
	\lim_{R\to\infty}\langle [\alpha^t(Q^1_R),\psi_1(\mathbf p) H \psi_2^\ast(\mathbf q)]\rangle= e^{-it\mu}\lim_{R\to\infty}\langle [Q^1_R,\psi_1(\mathbf p) H \psi_2^\ast(\mathbf q)]\rangle\,,
	\ee
	we can repeat above derivation to obtain the energy splitting at zero momentum $\lambda v^2-\mu$ between the states obtained by applying $\phi_1^\ast(\mathbf p)=\psi_1^\ast(\mathbf p)-v$ and $\phi_2^\ast(\mathbf q)=\psi_2^\ast(\mathbf q)$ to the ground state (in agreement with the \eqref{sb_tad}, \eqref{tadpole_FINALE} and \eqref{gap}).

	\vspace{6pt}
	
\subsection*{Acknowledgements} The research of CH was supported in part by Scuola Normale Superiore, Pisa, by INFN, Sezione di Pisa, and by the Knut
	and Alice Wallenberg Foundation under grant KAW 2018.0116.
	
\providecommand{\href}[2]{#2}\begingroup\raggedright\endgroup

\end{document}